\documentclass[11pt,expanded]{article}
\usepackage{verbatim}
\usepackage{color}
\usepackage{amssymb,amsmath,amsthm}
\usepackage{geometry}
\usepackage{graphicx}
\usepackage{subfigure}

\usepackage{times}
\usepackage{epsfig}

\usepackage{indentfirst}
\usepackage{cite}
\usepackage{amsfonts}
\usepackage{array}
\usepackage{multirow}

\usepackage{xcolor}
\usepackage{authblk}

\title{Title}

\usepackage{extarrows}
\usepackage{url}
\usepackage{mathrsfs}
\usepackage{caption}
\usepackage{setspace}
\usepackage{booktabs}
\usepackage[round,authoryear]{natbib}
\usepackage{amsthm}
\theoremstyle{definition}
\newtheorem{definition}{Definition}
\newtheorem{lemma}{Lemma}
\newtheorem{theorem}{Theorem}
\newtheorem{corollary}{Corollary}
\newtheorem{remark}{Remark}

\usepackage{hyperref}

\usepackage[normalem]{ulem}
\usepackage{titlesec} 

\titleformat{\section}
  {\Large\bfseries} 
  {\thesection}{1em}{} 

\usepackage[ruled,vlined]{algorithm2e}

\newcommand{\ie}{{\it i.e.}}
\newcommand{\eg}{{\it e.g.}}

\begin{document}
\title{General sample size analysis for probabilities of causation: a delta method approach}
\date{}
\author[1]{Tianyuan Cheng}
\author[2]{Ruirui Mao}
\author[3]{Judea Pearl}
\author[4]{Ang Li}

\affil[1]{Department of Statistics, Florida State University}
\affil[2]{Department of Computer Science, Cornell University}
\affil[3]{Department of Computer Science, University of California Los Angeles}
\affil[4]{Department of Computer Science, Florida State University}
\maketitle

\begin{abstract}
Probabilities of causation (PoCs), such as the probability of necessity and sufficiency (PNS), are important tools for decision making but are generally not point identifiable. Existing work has derived bounds for these quantities using combinations of experimental and observational data. However, there is very limited research on sample size analysis, namely, how many experimental and observational samples are required to achieve a desired margin of error. In this paper, we propose a general sample size framework based on the delta method. Our approach applies to settings in which the target bounds of PoCs can be expressed as finite minima or maxima of linear combinations of experimental and observational probabilities. Through simulation studies, we demonstrate that the proposed sample size calculations lead to stable estimation of these bounds.
\end{abstract}

\section{Introduction}

Probabilities of causation (PoCs) are used in many real-world applications, such as marketing, law, social science, and health science, especially when decisions depend on whether an action caused an outcome. For example, \cite{li2022unit} introduced a ``benefit function” that is a linear combination of PoCs and reflects the payoff or cost of selecting an individual with certain features, with the goal of finding those most likely to show a target behavior. \cite{stott2004human} used it in climate event assignment to quantify how much human influence changes the risk of an extreme event. \cite{mueller2023personalized} argued that PoCs can be used for personalized decision making, and \cite{li2020training} found that PoCs can be helpful for improving the accuracy of some machine learning methods.

Without extra assumptions, PoCs are generally not identifiable, so one often works with bounds instead of point values. Using the structural causal model (SCM), \cite{pearl1999probabilities} defined three binary PoCs, including PNS, PN, and PS. \cite{tian2000probabilities} derived bounds for these quantities using both experimental and observational information, and later \cite{li2019unit,li2024probabilities} provided formal proofs. Several papers studied how to tighten these bounds. For example, \cite{mueller2021causes} used covariates and causal structure to narrow the bounds for PNS, and \cite{dawid2017probability} used covariates to narrow the bounds for PN. 

Most of these works implicitly assume that the experimental and observational samples are large enough to estimate the needed probabilities well. However, there is little work on how to choose sample sizes so that the estimated bounds have a desired level of precision. This gap limits the use of the theory in real applications. \cite{li2022probabilities} discussed this issue, but focused on a special case for the PNS bound. To our knowledge, there is still no unified framework that links a target error level to required experimental and observational sample sizes for general PoC bounds. 

In this paper, we study adequate sample sizes for estimating bounds of PoCs from a general perspective. Our starting point is that many sharp bounds can be written as a finite minimum or maximum of explicit functions of a finite set of probabilities, including experimental probabilities such as $P(y_x)$ and observational probabilities such as $P(x,y)$. For PNS, the bound components are typically linear in these probabilities; for PN and PS, the bound components often take a ratio form. Under mild regularity conditions (e.g., denominators bounded away from zero), the bound components are smooth transformations of the underlying probability vector. The resulting lower and upper bounds can still be non-smooth because they are formed by finite minima or maxima. The fact that the components are smooth but the bounds can be non-smooth leads to different asymptotic behavior in the two cases, which our results accommodate.

Our main contributions are:
\begin{itemize}
\item We propose a general sample size framework for estimating PoC bound endpoints (i.e., lower and upper bounds) with a pre-specified margin of error, based on multivariate delta-method variance approximations for smooth endpoints, and a directional delta method implemented via numerical methods for non-smooth endpoints, following \cite{fang2019inference}.
\item Our asymptotic results apply whenever the bound endpoints can be expressed as finite minima or maxima of explicit bound components. This covers common forms in the PoC literature and also extends to other bounded causal quantities, including linear combinations of PoCs.
\item We provide simulation studies showing that the proposed sample sizes are stable and sufficient in practice, and that they are far less conservative than existing results in the literature.
\end{itemize}

\section{Preliminaries}\label{sec2}
In this section, we briefly review the definitions of the three aspects of binary causation following \cite{tian2000probabilities}. Our analysis is based on the counterfactual framework within structural causal models (SCMs) as introduced in \cite{pearl2009causality}. We denote by $Y_x = y$ the counterfactual statement that variable $Y$ would take value $y$ if $X$ were set to $x$. Throughout the paper, we use $y_x$ to represent the event $Y_x = y$, $y_{x'}$ for $Y_{x'} = y$, $y_x'$ for $Y_x = y'$, and $y_{x'}'$ for $Y_{x'} = y'$. Experimental information is summarized through causal quantities such as $P(y_x)$, while observational information is summarized by joint distributions such as $P(x, y)$. Unless otherwise stated, $X$ denotes the treatment variable and $Y$ denotes the outcome variable.

For the following three probabilities of causation, we all assume that $X$ and $Y$ are two binary variables in a causal model $M$. Let $x$ and $y$ stand for the propositions $X=\text{true}$ and $Y=\text{true}$, respectively, and $x'$ and $y'$ for their complements. Then we have:

\begin{definition}[Probability of Necessity (PN)]\label{defPN}
The probability of necessity is defined as:
\begin{equation}
\begin{aligned}
    \text{PN}
\;\triangleq\; &
P(Y_{x'} = \text{false} \mid X=\text{true}, Y=\text{true}) \\
\;\triangleq\;&
P(y_{x'}' \mid x, y).
\end{aligned}
\end{equation}
\end{definition}

\begin{definition}[Probability of Sufficiency (PS)]\label{defPS}
The probability of sufficiency is defined as:
\begin{equation}
\begin{aligned}
    \text{PS}
\;\triangleq\;&
P(Y_x = \text{true} \mid X=\text{false}, Y=\text{false})\\
\;\triangleq\;&
P(y_x \mid x', y').
\end{aligned}    
\end{equation}
\end{definition}

\begin{definition}[Probability of Necessity and Sufficiency (PNS)]\label{defPNS}
The probability of necessity and sufficiency is defined as:
\begin{equation}
\begin{aligned}
    \text{PNS}
\;\triangleq\;&
P(Y_x = \text{true},\, Y_{x'} = \text{false})\\
\;\triangleq\;&
P(y_x, y_{x'}').
\end{aligned}
\end{equation}
\end{definition}

We also introduce the sharp bounds on the PoCs derived from experimental and observational data in \cite{tian2000probabilities}, where the bounds are obtained via Balke’s program in \cite{balke1995probabilistic}. The bounds are:

\begin{equation}\label{PNSbound}
 \max\left\{
\begin{array}{l}
0,\\
P(y_x) - P(y_{x'}),\\
P(y) - P(y_{x'}),\\
P(y_x) - P(y)
\end{array}
\right\} \le \mathrm{PNS} \;\le\;
\min\left\{
\begin{array}{l}
P(y_x),\\
P(y_{x'}),\\
P(x,y) + P(x',y'),\\
P(y_x) - P(y_{x'}) + P(x,y') + P(x',y)
\end{array}
\right\}.
\end{equation}

\begin{equation}\label{PNbound}
\max\left\{
0,\;
\frac{P(y)-P(y_{x'})}{P(x,y)}
\right\}
\le \mathrm{PN}
\le
\min\left\{
1,\;
\frac{P(y'_{x'})-P(x',y')}{P(x,y)}
\right\}.
\end{equation}

\begin{equation}\label{PSbound}
\max\left\{
0,\;
\frac{P(y_x)-P(y)}{P(x',y')}
\right\}
\le \mathrm{PS}
\le
\min\left\{
1,\;
\frac{P(y_x)-P(x,y)}{P(x',y')}
\right\}.
\end{equation}

\section{Main Results}

Let $\theta$ be a parameter vector that collects all experimental and observational probabilities we need. For example, in the setting of the upper bound of \eqref{PNSbound} we can take 
\begin{equation}
    \theta=( P(y_x),P(y_{x'}),P(x,y), P(x',y'), P(x,y'), P(x',y))^\top.
\end{equation}

We first claim that in most scenarios, the bounds of PoCs can be written in the form of a complex linear combination of $\theta$.

\begin{lemma}[Piecewise-linear and fractional forms of sharp bounds for PoCs]\label{lem1}
Consider a structural causal model with finite discrete variables, and focus on the case that $X$ and $Y$ are both binary, \ie, $X$ has two treatment levels $x, x'$ and $Y$ has two outcome levels $y, y'$. Let $\theta\in\mathbb{R}^d$ collect the observational and experimental probabilities used to constrain the model.

For a probability of causation $Q\in\{\mathrm{PNS},\mathrm{PN},\mathrm{PS}\}$ as defined in Section $2$, let $U_{\mathrm{Q}}(\theta)$ and $L_{\mathrm{Q}}(\theta)$ be its sharp upper and lower bounds over all SCMs. Then there exist finite positive integers $J,K<\infty$, vectors $a_j,c_k\in\mathbb{R}^d$ and constants $b_j,d_k\in\mathbb{R}$ such that
\begin{equation}\label{equ:affineform}
    U_{\mathrm{Q}}(\theta)=\frac{1}{h_Q(\theta)}\min_{1\le j\le J}\{a_j^\top\theta+b_j\},\,\, \,L_{\mathrm{Q}}(\theta)=\frac{1}{h_Q(\theta)}\max_{1\le k\le K}\{c_k^\top\theta+d_k\},
\end{equation}
where the denominator $h_Q(\theta)$ is defined as:
\begin{enumerate}
    \item For PNS: $h_{\text{PNS}}(\theta) = 1$ (reducing to a piecewise-linear form).
    \item For PN: $h_{\text{PN}}(\theta) = P(x, y)$, which is a specific component of $\theta$.
    \item For PS: $h_{\text{PS}}(\theta) = P(x', y')$, which is a specific component of $\theta$.
\end{enumerate}
$U_Q(\theta)$ and $L_Q(\theta)$ are continuous and piecewise-defined functions of $\theta$. They are differentiable at any $\theta$ where the optimizer is unique.
\end{lemma}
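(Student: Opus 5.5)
The plan is to derive the structure of the bounds from the linear-programming characterization behind Balke's program, rather than from the explicit formulas alone. Then I would check it against the explicit bounds \eqref{PNSbound}--\eqref{PSbound}.

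\textbf{Step 1: LP formulation.} With $X,Y$ binary, any SCM induces a joint distribution over $(X, Y_x, Y_{x'})$. This is a probability vector $q\in\mathbb{R}^{8}$, $q\ge 0$, $\mathbf{1}^\top q=1$. Every experimental probability $P(y_x),P(y_{x'})$ and every observational probability $P(x,y)$, etc., is a linear functional of $q$, by consistency ($X=x\Rightarrow Y=Y_x$). So the constraints take the form $Aq=\theta$, $q\ge0$, with $A$ a fixed $0/1$ matrix. Conversely, every such $q$ is realized by some SCM (take the response-function representation). Hence the sharp bounds are optimal values of linear programs:
\begin{equation*}
U_{\mathrm{PNS}}(\theta)=\max\{g^\top q: Aq=\theta,\ q\ge0\},\qquad L_{\mathrm{PNS}}(\theta)=\min\{g^\top q: Aq=\theta,\ q\ge0\},
\end{equation*}
where $g^\top q=P(y_x,y'_{x'})$.

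\textbf{Step 2: Duality.} By LP duality, for feasible $\theta$ the upper bound is $\min_{\lambda\in D}\lambda^\top\theta$. Here $D=\{\lambda: A^\top\lambda\ge g\}$ is a polyhedron independent of $\theta$. Because the feasible set is bounded, the optimum is attained at a vertex of $D$, or at a vertex modulo the lineality space. Any direction in that lineality space is orthogonal to the affine hull of feasible $\theta$, so the constraint $\mathbf{1}^\top q=1$ can be absorbed into the constants $b_j$. $D$ has finitely many vertices $\lambda_1,\dots,\lambda_J$, which gives $U=\min_j\{a_j^\top\theta+b_j\}$. The lower bound is handled symmetrically as a max. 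This yields \eqref{equ:affineform} with $h=1$.

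For PN and PS, the objective is conditional: $\mathrm{PN}=P(y'_{x'},x,y)/P(x,y)$. The numerator is linear in $q$ and the denominator $P(x,y)=h_{\mathrm{PN}}(\theta)$ is fixed by the constraints. So the same LP applied to the numerator gives a min/max of affine functions, divided by $h_Q(\theta)$. For example, $\min\{1,\cdot\}$ becomes $\min\{P(x,y),P(y'_{x'})-P(x',y')\}/P(x,y)$. As a sanity check, I would verify that the enumerated vertices reproduce \eqref{PNSbound}--\eqref{PSbound}, citing \cite{tian2000probabilities}.

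\textbf{Step 3: Regularity.} A finite min or max of affine maps is continuous and piecewise affine. Dividing by $h_Q$, which is continuous and positive where the bound is defined ($P(x,y)>0$, respectively $P(x',y')>0$), preserves continuity. Suppose the minimizing index $j^*$ is unique at $\theta_0$. By continuity, the strict inequalities $a_{j^*}^\top\theta+b_{j^*}<a_j^\top\theta+b_j$ persist in a neighborhood of $\theta_0$. There the bound equals the smooth function $(a_{j^*}^\top\theta+b_{j^*})/h_Q(\theta)$, so it is differentiable.

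The main obstacle is Step 2's bookkeeping. The dual polyhedron may be unbounded and $\theta$ lies in a lower-dimensional affine set, so I must argue carefully that only finitely many vertex candidates matter on the feasible set. If that becomes delicate, I would fall back to direct verification from the explicit sharp bounds of \cite{tian2000probabilities}. Those already exhibit the required form with $J,K\le4$, which suffices for the stated $Q$.
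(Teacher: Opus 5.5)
Your proposal is correct and follows essentially the same route as the paper: you write the sharp bound (for PN and PS, its numerator, with the denominator $P(x,y)$ or $P(x',y')$ fixed by the constraints) as a linear program over response-type distributions. You then use LP duality and the finiteness of the vertex set of the $\theta$-independent dual polyhedron to obtain a finite min/max of affine functions of $\theta$.

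Your version is more explicit than the paper's one-line argument in two places. First, you handle the lineality space: only the vertices of the pointed polyhedron $D\cap(\ker A^\top)^\perp$ are needed, because every feasible $\theta=Aq$ is orthogonal to $\ker A^\top$. Second, you prove the continuity and local-differentiability claims via persistence of the strict gap, which the paper's proof asserts without argument.
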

\begin{proof}
According to \cite{tian2000probabilities}, for finite discrete SCMs, sharp bounds for PNS (and the numerators of PN, PS) can be obtained by optimizing a linear functional over the set of latent-type distributions consistent with $\theta$, which is a linear programming (LP) problem. Then, by LP duality and the polyhedral structure of the dual feasible set (see \cite{boyd2004convex}), the optimal value is a piecewise-linear function of $\theta$ and has a finite min or max representation of linear functions $a^\top \theta + b$. 
\end{proof}

Lemma $1$ gives a unified form for the sharp bounds: for PNS they are piecewise-linear in $\theta$ while for PN and PS they are piecewise linear-fractional due to normalization by $P(x, y)$ and $P(x', y')$. In either case, the bound endpoints are piecewise smooth and are differentiable whenever the active term is unique (optimizer is unique), and thus the inference for the plug-in estimators $\widehat{U_Q}=U_Q(\widehat{\theta})$ and $\widehat{L_Q}=L_Q(\widehat{\theta})$ reduces to studying piecewise-defined functions of
$\widehat{\theta}$. Next, we establish asymptotic normality and confidence intervals in the regular case where the optimizer is unique.

We first state the setup: let $m$ be the experimental sample size and $n$ be the observational sample size.
Assume $r$ is the ratio of experimental sample size to observational sample size, \ie, $m/n\to r\in(0,\infty)$. Let $\theta=(\theta_e^\top,\theta_o^\top)^\top\in\mathbb{R}^d$ collect the experimental and observational probabilities used in Lemma $1$, with true value $\theta_0=(\theta_{e0}^\top,\theta_{o0}^\top)^\top\in\mathbb{R}^d$. Let $\hat\theta=(\hat\theta_e^\top,\hat\theta_o^\top)^\top$ be the plug-in estimator based on
the two independent samples. By the multivariate central limit theorem (CLT) for multinomial proportions,
\begin{equation}
\sqrt{m}\,(\hat\theta_e-\theta_{e0})\ \Rightarrow\ N(0,\Sigma_e), \qquad
\sqrt{n}\,(\hat\theta_o-\theta_{o0})\ \Rightarrow\ N(0,\Sigma_o),
\end{equation}
and the two limits are independent. Equivalently,
\begin{equation}
\sqrt{n}\,(\hat\theta-\theta_0)\ \Rightarrow\ N(0,\Omega_r),
\,\,
\Omega_r=\begin{pmatrix}\Sigma_e/r & 0\\ 0 & \Sigma_o\end{pmatrix}.
\end{equation}

\begin{theorem}\label{thm1}
Assume $h_Q(\theta_0)>0$. Let 
\begin{equation*}
j^*=\arg\min_{1\le j\le J}\{a_j^\top\theta_0+b_j\}, \qquad k^*=\arg\max_{1\le k\le K}\{c_k^\top\theta_0+d_k\},
\end{equation*}
and assume both optimizers are unique with positive gaps:
\begin{equation*}
\Delta_U:=\min_{j\neq j^*}(a_j^\top\theta_0+b_j)-(a_{j^*}^\top\theta_0+b_{j^*})>0,
\end{equation*}
\begin{equation*}
\Delta_L:=(c_{k^*}^\top\theta_0+d_{k^*})-\max_{k\neq k^*}(c_k^\top\theta_0+d_k)>0.
\end{equation*}

Suppose $\sqrt{n}\,(\hat\theta-\theta_0)\ \Rightarrow\ N(0,\Omega_r)$. Then $U_Q$ and $L_Q$ are differentiable at $\theta_0$ and
\begin{equation}
\begin{aligned}
&\sqrt{n}\,(\hat U_Q-U_Q(\theta_0))\ \Rightarrow\
N\!\left(0,\ \nabla U_Q(\theta_0)^\top\Omega_r\,\nabla U_Q(\theta_0)\right),\\
&\sqrt{n}\,(\hat L_Q-L_Q(\theta_0))\ \Rightarrow\
N\!\left(0,\ \nabla L_Q(\theta_0)^\top\Omega_r\,\nabla L_Q(\theta_0)\right).
\end{aligned}
\end{equation}
\qed
\end{theorem}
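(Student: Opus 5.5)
The plan is to reduce both claims to the classical multivariate delta method. The key step is showing that, near $\theta_0$, the finite minimum (maximum) in \eqref{equ:affineform} coincides with a single smooth function. Throughout I write $g_j(\theta)=a_j^\top\theta+b_j$ and $f_k(\theta)=c_k^\top\theta+d_k$. I treat $U_Q$ in detail; $L_Q$ is symmetric, with $\max$, $k^*$ and $\Delta_L$ in place of $\min$, $j^*$ and $\Delta_U$.

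\textbf{Step 1 (local identification of the active piece).} Set $A=\max_j\|a_j\|$, which is finite since $J<\infty$. If $\|\theta-\theta_0\|<\Delta_U/(2A+1)$, then every $g_j$ moves by less than $\Delta_U/2$ by Cauchy--Schwarz. Hence for every $j\neq j^*$,
\[
g_j(\theta)-g_{j^*}(\theta) > \Delta_U-\Delta_U = 0 ,
\]
so $\min_j g_j(\theta)=g_{j^*}(\theta)$ on a ball $B$ around $\theta_0$. Since $h_Q$ is a coordinate of $\theta$ (or the constant $1$), it is continuous, and $h_Q(\theta_0)>0$. Shrinking $B$ if necessary, I get $h_Q(\theta)\ge h_Q(\theta_0)/2>0$ on $B$. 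Therefore on $B$ we have $U_Q(\theta)=g_{j^*}(\theta)/h_Q(\theta)$, a ratio of affine functions with denominator bounded away from zero. This map is $C^\infty$ on $B$, so $U_Q$ is differentiable at $\theta_0$ with
\[
\nabla U_Q(\theta_0)=\frac{a_{j^*}}{h_Q(\theta_0)}-\frac{g_{j^*}(\theta_0)}{h_Q(\theta_0)^2}\,\nabla h_Q(\theta_0),
\]
where $\nabla h_Q$ is $0$ for PNS and a standard basis vector for PN and PS.

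\textbf{Step 2 (delta method).} The hypothesis $\sqrt n(\hat\theta-\theta_0)\Rightarrow N(0,\Omega_r)$ implies $\hat\theta-\theta_0=O_p(n^{-1/2})$, so $P(\hat\theta\in B)\to1$. Differentiability at $\theta_0$ gives
\[
U_Q(\hat\theta)-U_Q(\theta_0)=\nabla U_Q(\theta_0)^\top(\hat\theta-\theta_0)+o(\|\hat\theta-\theta_0\|).
\]
Multiplying by $\sqrt n$, the remainder is $o_p(1)$. The continuous mapping theorem together with Slutsky's lemma then gives the limit $\nabla U_Q(\theta_0)^\top Z$ with $Z\sim N(0,\Omega_r)$, which is $N(0,\nabla U_Q^\top\Omega_r\nabla U_Q)$ as claimed. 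Equivalently, one can invoke the standard delta method theorem directly once differentiability is established.

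\textbf{Main obstacle.} The only genuinely non-routine point is Step 1, namely that the nonsmooth $\min$/$\max$ operator is locally inactive. This is exactly where the positive gaps $\Delta_U,\Delta_L$ and the positivity of $h_Q(\theta_0)$ are needed. I will also note that the event $\hat\theta\notin B$, on which $\hat U_Q$ may use a different piece or even a vanishing denominator, has probability tending to zero. It therefore does not affect the weak limit, and no moment conditions are required.
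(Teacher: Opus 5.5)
Your proposal is correct and follows the same route as the paper: the positive gap makes the minimum (maximum) coincide with the single active affine piece near $\theta_0$, the quotient rule gives $\nabla U_Q(\theta_0)$, and the multivariate delta method finishes the argument. Two details go slightly beyond the paper without changing the approach: you obtain the gradient directly from local equality with $g_{j^*}$ on an explicit ball of radius $\Delta_U/(2A+1)$, where the paper cites Danskin's theorem, and you note that the event $\hat\theta\notin B$ (where a denominator could vanish) is asymptotically negligible, which the paper leaves implicit.
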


\begin{proof}
We prove the result for the upper bound $U_Q$ only and the proof for lower bound is similar.

Define
\begin{equation*}
g(\theta):=\min_{1\le j\le J}\{a_j^\top\theta+b_j\},
\qquad
U_Q(\theta)=\frac{g(\theta)}{h_Q(\theta)}.    
\end{equation*}
By assumption, the minimizer at $\theta_0$ is unique and has a positive gap:
\begin{equation*}
\Delta_U=\min_{j\neq j^*}\bigl(a_j^\top\theta_0+b_j\bigr)-\bigl(a_{j^*}^\top\theta_0+b_{j^*}\bigr)>0.   
\end{equation*}
Since each $a_j^\top\theta+b_j$ is continuous and the index set is finite, the gap implies that there exist a $\epsilon >0$ such that for all $\|\theta-\theta_0\|\le \epsilon$, the minimizer is $j^*$. Hence $g(\theta)$ is differentiable at $\theta_0$. By Danskin's theorem (\cite{danskin2012theory}), 
\begin{equation*}
\nabla g(\theta_0)=a_{j^*}.    
\end{equation*}

Because $h_Q$ is differentiable at $\theta_0$ and $h_Q(\theta_0)>0$, it follows that $U_Q$ is differentiable at $\theta_0$ with
\begin{equation*}
\nabla U_Q(\theta_0)
=\frac{h_Q(\theta_0)\nabla g(\theta_0)-g(\theta_0)\nabla h_Q(\theta_0)}{h_Q(\theta_0)^2}
=\frac{h_Q(\theta_0)a_{j^*}-(a_{j^*}^\top\theta_0+b_{j^*})\nabla h_Q(\theta_0)}{h_Q(\theta_0)^2}.
\end{equation*}

Now apply the multivariate delta method to the differentiable map
$\theta\mapsto U_Q(\theta)$ at $\theta_0$:
\begin{equation}
    \sqrt n\bigl(U_Q(\hat\theta)-U_Q(\theta_0)\bigr)
\ \Rightarrow\
N\Bigl(0,\ \nabla U_Q(\theta_0)^\top\,\Omega_\tau\,\nabla U_Q(\theta_0)\Bigr).
\end{equation}

\end{proof}

Using Theorem \ref{thm1} we can derive a corresponding confidence interval as follows:
\begin{corollary}\label{coro1}
Assume the conditions of Theorem $1$. Let $\widehat{\Omega}_r$ be a consistent estimator of $\Omega_r$.
Define the plug-in variance estimators
\begin{equation*}
    \widehat{V}_U := \nabla U_Q(\widehat{\theta})^\top \widehat{\Omega}_r \nabla U_Q(\widehat{\theta}),
    \qquad
    \widehat{V}_L := \nabla L_Q(\widehat{\theta})^\top \widehat{\Omega}_r \nabla L_Q(\widehat{\theta}),
\end{equation*}
and the corresponding standard errors
\begin{equation*}
 \widehat{\mathrm{SE}}(\widehat{U}_Q) := \sqrt{\widehat{V}_U/n},
\qquad
\widehat{\mathrm{SE}}(\widehat{L}_Q) := \sqrt{\widehat{V}_L/n}.   
\end{equation*}
Then, asymptotic $(1-\alpha)$ confidence intervals for $U_Q(\theta_0)$ and $L_Q(\theta_0)$ are
\begin{equation}
CI_U=\left[\widehat{U}_Q - z_{1-\alpha/2}\,\widehat{\mathrm{SE}}(\widehat{U}_Q),\, \widehat{U}_Q + z_{1-\alpha/2}\,\widehat{\mathrm{SE}}(\widehat{U}_Q) \right],
\end{equation}
\begin{equation}
CI_L= \left[\widehat{L}_Q - z_{1-\alpha/2}\,\widehat{\mathrm{SE}}(\widehat{L}_Q), \, \widehat{L}_Q + z_{1-\alpha/2}\,\widehat{\mathrm{SE}}(\widehat{L}_Q) \right] ,    
\end{equation}
where $z_{1-\alpha/2}$ can be found on z-table of standard normal distribution.\qed
\end{corollary}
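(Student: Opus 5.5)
The plan is to reduce Corollary \ref{coro1} to Theorem \ref{thm1} through a Slutsky-type argument. The main step is showing that the plug-in standard errors are consistent, after normalization, for the asymptotic standard deviations. I treat $U_Q$ only; $L_Q$ is symmetric, with $k^*$ and $\Delta_L$ in place of $j^*$ and $\Delta_U$. I also assume the nondegeneracy condition $V_U:=\nabla U_Q(\theta_0)^\top\Omega_r\nabla U_Q(\theta_0)>0$, which is implicit in the statement. If $V_U=0$ the interval degenerates and the coverage claim must be read in the trivial sense.

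\emph{Step 1: the gradient is locally constant in the active index.} The proof of Theorem \ref{thm1} used the gap $\Delta_U>0$ to get an $\epsilon>0$ such that $j^*$ is the unique minimizer for all $\|\theta-\theta_0\|\le\epsilon$. On that ball, $U_Q(\theta)=(a_{j^*}^\top\theta+b_{j^*})/h_Q(\theta)$. I shrink $\epsilon$ so that $h_Q$ stays bounded away from zero there; this is possible because $h_Q$ is a coordinate of $\theta$ (or the constant $1$) and $h_Q(\theta_0)>0$. Then
\[
\nabla U_Q(\theta)=\frac{h_Q(\theta)a_{j^*}-(a_{j^*}^\top\theta+b_{j^*})\nabla h_Q(\theta)}{h_Q(\theta)^2}
\]
on the ball, and this is continuous in $\theta$.

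\emph{Step 2: consistency of $\widehat V_U$.} Theorem \ref{thm1} assumes $\sqrt n(\hat\theta-\theta_0)\Rightarrow N(0,\Omega_r)$, so $\hat\theta\to\theta_0$ in probability. Hence $P(\|\hat\theta-\theta_0\|\le\epsilon)\to1$, and on this event $\nabla U_Q(\hat\theta)$ is given by the formula in Step 1. The continuous mapping theorem gives $\nabla U_Q(\hat\theta)\to\nabla U_Q(\theta_0)$ in probability. Combining this with $\widehat\Omega_r\to\Omega_r$ in probability yields $\widehat V_U\to V_U$ in probability.

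\emph{Step 3: studentization and coverage.} Write
\[
\frac{\hat U_Q-U_Q(\theta_0)}{\widehat{\mathrm{SE}}(\hat U_Q)}=\frac{\sqrt n(\hat U_Q-U_Q(\theta_0))}{\sqrt{V_U}}\cdot\sqrt{V_U/\widehat V_U}.
\]
By Theorem \ref{thm1} the first factor converges to $N(0,1)$. By Step 2 and $V_U>0$, the second factor converges to $1$ in probability. Slutsky's lemma then gives convergence of the ratio to $N(0,1)$. The event $U_Q(\theta_0)\in CI_U$ is exactly the event that the absolute value of this ratio is at most $z_{1-\alpha/2}$. Its probability therefore tends to $2\Phi(z_{1-\alpha/2})-1=1-\alpha$, because the normal limit puts no mass on the boundary. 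The same argument applies to $CI_L$.

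The main obstacle is Step 2. The gradient is evaluated at the random point $\hat\theta$, and $U_Q$ is not globally differentiable, so a kink could in principle be active at $\hat\theta$. The positive-gap assumption handles this by localizing to the ball where the active index is fixed. The estimate then holds on an event whose probability tends to one, which is all that convergence in probability requires.
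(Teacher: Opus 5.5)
Your proposal is correct and follows essentially the same route as the paper's proof: Theorem~\ref{thm1} gives the limit of $\sqrt{n}(\widehat U_Q-U_Q(\theta_0))$; the positive gap keeps the active index fixed near $\theta_0$, which yields $\nabla U_Q(\widehat\theta)\xrightarrow{p}\nabla U_Q(\theta_0)$ and hence $\widehat V_U\xrightarrow{p}V_U$; and Slutsky's lemma then handles the studentization. Your extra care fills in points the paper leaves implicit without changing the argument: the explicit assumption $V_U>0$, the event of probability tending to one on which $\nabla U_Q(\widehat\theta)$ is given by the explicit formula, and the coverage computation.
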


\begin{proof}
By Theorem 1, we have
\begin{equation*}
\sqrt{n}\bigl(\widehat{U}_Q - U_Q(\theta_0)\bigr) \Rightarrow N(0,V_U),
\qquad
V_U := \nabla U_Q(\theta_0)^\top \Omega_r \nabla U_Q(\theta_0),    
\end{equation*}
and similarly
\begin{equation*}
    \sqrt{n}\bigl(\widehat{L}_Q - L_Q(\theta_0)\bigr) \Rightarrow N(0,V_L),
\qquad
V_L := \nabla L_Q(\theta_0)^\top \Omega_r \nabla L_Q(\theta_0).
\end{equation*}

Let $\widehat{\Omega}_r$ be a consistent estimator of $\Omega_r$. Under the uniqueness conditions in
Theorem 1, $U_Q$ and $L_Q$ are differentiable in a neighborhood of $\theta_0$, hence
$\nabla U_Q(\cdot)$ and $\nabla L_Q(\cdot)$ are continuous at $\theta_0$. Since
$\widehat{\theta} \xrightarrow{p} \theta_0$, we have
\begin{equation*}
\nabla U_Q(\widehat{\theta}) \xrightarrow{p} \nabla U_Q(\theta_0),
\qquad
\nabla L_Q(\widehat{\theta}) \xrightarrow{p} \nabla L_Q(\theta_0).    
\end{equation*}

Therefore, by Slutsky's theorem (\cite{van2000asymptotic}),
\begin{equation*}
\widehat{V}_U := \nabla U_Q(\widehat{\theta})^\top \widehat{\Omega}_r \nabla U_Q(\widehat{\theta})
\xrightarrow{p} V_U,
\qquad
\widehat{V}_L := \nabla L_Q(\widehat{\theta})^\top \widehat{\Omega}_r \nabla L_Q(\widehat{\theta})
\xrightarrow{p} V_L.    
\end{equation*}

Define $\widehat{\mathrm{SE}}(\widehat{U}_Q) = \sqrt{\widehat{V}_U/n}$ and
$\widehat{\mathrm{SE}}(\widehat{L}_Q) = \sqrt{\widehat{V}_L/n}$. Then
\begin{equation}
    \frac{\widehat{U}_Q - U_Q(\theta_0)}{\widehat{\mathrm{SE}}(\widehat{U}_Q)} \Rightarrow N(0,1),
\qquad
\frac{\widehat{L}_Q - L_Q(\theta_0)}{\widehat{\mathrm{SE}}(\widehat{L}_Q)} \Rightarrow N(0,1),
\end{equation}
which gives an asymptotic $(1-\alpha)$ confidence intervals.
\end{proof}

When the unique-optimizer assumption in Theorem $1$ fails, means that when we have multiple active constraints, then the bound function may be non-differentiable at $\theta_0$. In this case we use a directional delta method (see \cite{dumbgen1993nondifferentiable}, \cite{fang2019inference}) to derive the limiting distribution. The following theorem states the result.

\begin{theorem}\label{thm2}
Let $U_Q(\theta)$ and $L_Q(\theta)$ be defined as in equation \eqref{equ:affineform}. Assume $h_Q(\theta_0)>0$ and $h_Q$ is differentiable at $\theta_0$. Define the active sets as follows:
\begin{equation*}
J_0= \arg \min_j ( \ a_j^\top\theta_0+b_j), \qquad K_0= \arg \max_k ( c_k^\top\theta_0+d_k).
\end{equation*}

If $\sqrt{n}\bigl(\hat{\theta}-\theta_0\bigr)\Rightarrow Z\sim N\bigl(0,\Omega_r\bigr)$, then
\begin{equation}
\sqrt{n}\bigl(\widehat{U}_Q-U_Q(\theta_0)\bigr)
\Rightarrow \min_{j\in J_0} g_{U,j}^{\top} Z,\qquad
\sqrt{n}\bigl(\widehat{L}_Q-L_Q(\theta_0)\bigr)
\Rightarrow \max_{k\in K_0} g_{L,k}^{\top} Z,
\end{equation}
where the generalized gradients at $\theta_0$ are
\begin{equation*}
g_{U,j}=\frac{a_j}{h_Q(\theta_0)} - \frac{\min_{\ell}\bigl(a_{\ell}^{\top}\theta_0+b_{\ell}\bigr)}{h_Q(\theta_0)^2}\, \nabla h_Q(\theta_0), \qquad  j\in J_0.
\end{equation*}

\begin{equation*}
g_{L,k}
=\frac{c_k}{h_Q(\theta_0)}- \frac{\max_{\ell}\bigl(c_{\ell}^{\top}\theta_0+d_{\ell}\bigr)}{h_Q(\theta_0)^2}\,\nabla h_Q(\theta_0),   \qquad  k\in K_0.
\end{equation*}

\qed
\end{theorem}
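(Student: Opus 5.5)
The approach is the functional delta method for Hadamard directionally differentiable maps (\cite{fang2019inference}). Write $U_Q=g/h_Q$ with $g(\theta)=\min_j\{a_j^\top\theta+b_j\}$. I will show that $U_Q$ is Hadamard directionally differentiable at $\theta_0$ with derivative $\phi'_{U}(z)=\min_{j\in J_0} g_{U,j}^\top z$. Since this map is continuous in $z$, the conclusion $\sqrt n(\widehat U_Q-U_Q(\theta_0))\Rightarrow \phi'_U(Z)$ then follows from Theorem 2.1 of \cite{fang2019inference} (equivalently, from the extended continuous mapping theorem). The lower bound is handled identically with max in place of min, or by applying the upper-bound argument to $-\max_k\{c_k^\top\theta+d_k\}=\min_k\{-c_k^\top\theta-d_k\}$.

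\textbf{Step 1: directional derivative of $g$.} Take $t_n\downarrow0$ and $z_n\to z$. For $j\notin J_0$ the gap $a_j^\top\theta_0+b_j-g(\theta_0)>0$ is fixed, so for all large $n$ the minimum at $\theta_0+t_nz_n$ is attained within $J_0$. The proof of Theorem \ref{thm1} uses the same argument in the single-index case. For $j\in J_0$ we have $a_j^\top(\theta_0+t_nz_n)+b_j=g(\theta_0)+t_na_j^\top z_n$. Hence
\[
\frac{g(\theta_0+t_nz_n)-g(\theta_0)}{t_n}=\min_{j\in J_0}a_j^\top z_n\to\min_{j\in J_0}a_j^\top z,
\]
and this holds exactly for large $n$, not only approximately.

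\textbf{Step 2: the quotient.} Since $h_Q$ is differentiable (here affine in $\theta$) with $h_Q(\theta_0)>0$, the map $1/h_Q$ is Hadamard differentiable at $\theta_0$. The product/chain rule for Hadamard directional derivatives then gives
\[
U_Q'(z)=\frac{\min_{j\in J_0}a_j^\top z}{h_Q(\theta_0)}-\frac{g(\theta_0)\,\nabla h_Q(\theta_0)^\top z}{h_Q(\theta_0)^2}.
\]
The second term does not depend on $j$, so it can be moved inside the minimum. The result is $\min_{j\in J_0}g_{U,j}^\top z$, and $g(\theta_0)=\min_\ell(a_\ell^\top\theta_0+b_\ell)$ matches the stated $g_{U,j}$.

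\textbf{Step 3: conclude.} Set $Z_n=\sqrt n(\hat\theta-\theta_0)\Rightarrow Z$ and $t_n=n^{-1/2}$. Then $\sqrt n(U_Q(\theta_0+t_nZ_n)-U_Q(\theta_0))$ is the difference quotient from Steps 1--2 evaluated along $Z_n$. By Skorokhod representation, or directly by the extended continuous mapping theorem, it converges in distribution to $U_Q'(Z)=\min_{j\in J_0}g_{U,j}^\top Z$.

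The main obstacle is the verification in Step 1 that the convergence holds uniformly along perturbed directions $z_n\to z$, which is the Hadamard (rather than Gateaux) requirement. Here it is easy because $\|t_nz_n\|\to0$ eventually keeps $\theta_0+t_nz_n$ inside the neighborhood where only indices in $J_0$ can be active, and the active pieces are exactly linear. A minor point to check is that $\hat\theta$ eventually lies in the region where $h_Q>0$; this holds with probability tending to one by consistency, which suffices for weak convergence.
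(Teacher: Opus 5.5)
Your proposal is correct and follows essentially the same route as the paper. Like the paper, you show that the finite minimum is Hadamard directionally differentiable because the inactive indices drop out along $\theta_0+t_nz_n$, combine this with the differentiable factor $1/h_Q$ via the product rule (the paper writes out the three-term decomposition explicitly), and conclude with the directional delta method; your remark that $h_Q(\hat\theta)>0$ with probability tending to one is a small point the paper leaves implicit.
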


\begin{remark}
If $|J_0|=1$ and $|K_0|=1$, then the maps $U_Q(\theta)$ and $L_Q(\theta)$ are locally differentiable at $\theta_0$ and the asymptotic results reduce to Theorem \ref{thm1}.
If $|J_0|>1$ or $|K_0|>1$, then in general, the limiting distribution in Theorem \ref{thm2} is not Gaussian. Instead, it is the distribution of a minimum or maximum of finitely many correlated Gaussian linear forms evaluated at the Gaussian limit $Z\sim N(0,\Omega_r)$. 
\end{remark}

\begin{proof}
    
We prove the result for the upper bound $U_Q$ only and the proof for lower bound is similar.
For simplicity, denote $m(\theta)=\min_j(a_j^\top\theta+b_j)$ and $r(\theta)=1/h_Q(\theta)$ so that $U(\theta)=r(\theta)m(\theta)$.
Let $J_0=\arg\min_j(a_j^\top\theta_0+b_j)$.
Take any sequence $t_n\downarrow 0$ and any $h_n\to h$, and define $\theta_n=\theta_0+t_n h_n$.
Let $\delta_j=(a_j^\top\theta_0+b_j)-m(\theta_0)\ge 0$. Then,
\begin{equation*}
\frac{m(\theta_n)-m(\theta_0)}{t_n}
=\min_j\left(a_j^\top h_n+\frac{\delta_j}{t_n}\right).
\end{equation*}
If $j \notin J_0$, then $\delta_j > 0$. Since $t_n \downarrow 0$, we have $\delta_j/t_n \to \infty$, and thus
$a_j^{\top} h_n + \delta_j/t_n \to \infty$. Therefore, for all sufficiently large $n$, the minimum is attained within $J_0$, i.e.
\begin{equation*}
\min_{j}\left(a_j^{\top} h_n + \frac{\delta_j}{t_n}\right)
= \min_{j \in J_0}\left(a_j^{\top} h_n + \frac{\delta_j}{t_n}\right).
\end{equation*}
Thus,
\begin{equation*}
\lim_{n\to\infty}\frac{m(\theta_n)-m(\theta_0)}{t_n}
=\min_{j\in J_0} a_j^\top h
=:m'_{\theta_0}(h).
\end{equation*}
By assumption, $h_Q$ is differentiable at $\theta_0$ with $h_Q(\theta_0)>0$, so the map $r(\theta)=1/h_Q(\theta)$ is differentiable at $\theta_0$ and
\begin{equation*}
r'_{\theta_0}(h)=-\frac{\nabla h_Q(\theta_0)^\top h}{h_Q(\theta_0)^2}.
\end{equation*}
Using $U(\theta)=r(\theta)m(\theta)$ and the decomposition
\begin{equation*}
\begin{aligned}
\frac{U(\theta_n)-U(\theta_0)}{t_n}
=&r(\theta_n)\frac{m(\theta_n)-m(\theta_0)}{t_n}
+m(\theta_0)\frac{r(\theta_n)-r(\theta_0)}{t_n}\\
& +\frac{\bigl(r(\theta_n)-r(\theta_0)\bigr)\bigl(m(\theta_n)-m(\theta_0)\bigr)}{t_n},
\end{aligned}
\end{equation*}
the last term is $o(1)$ because $r(\theta_n)-r(\theta_0)=O(t_n)$ and $m(\theta_n)-m(\theta_0)=O(t_n)$.
Thus, $U$ is Hadamard directionally differentiable at $\theta_0$ with
\begin{equation}
U'_{\theta_0}(h)=r(\theta_0)\,m'_{\theta_0}(h)+m(\theta_0)\,r'_{\theta_0}(h)
=\frac{1}{h_Q(\theta_0)}\min_{j\in J_0}a_j^\top h
-\frac{m(\theta_0)}{h_Q(\theta_0)^2}\nabla h_Q(\theta_0)^\top h.
\end{equation}
In summary,
\begin{equation}
U'_{\theta_0}(h)=\min_{j\in J_0} g_{U,j}^\top h,
\qquad
g_{U,j}=\frac{a_j}{h_Q(\theta_0)}-\frac{m(\theta_0)}{h_Q(\theta_0)^2}\nabla h_Q(\theta_0),
\quad j\in J_0.
\end{equation}
By the directional delta method (\cite{dumbgen1993nondifferentiable}),
\begin{equation}
\sqrt{n}\bigl(\hat{U}-U(\theta_0)\bigr)\Rightarrow U'_{\theta_0}(Z)
=\min_{j\in J_0} g_{U,j}^\top Z.
\end{equation}
\end{proof}

Theorem \ref{thm2} gives the asymptotic result of $\hat U_Q,\, \hat L_Q$, but it is not directly usable in practice because it depends on the unknown active set $J_0$ and $K_0$. A natural idea is to use the standard bootstrap, but this fails due to the discontinuity of the directional derivative map with respect to $\theta_0$. To address this, we apply the numerical delta method of \cite{fang2019inference}, which consistently approximates the limit law by numerically evaluating directional derivatives via finite differences without needing to identify $J_0$ or $K_0$.

The corresponding confidence intervals are stated in the following corollary:

\begin{corollary}\label{coro2}
Assume the conditions of Theorem \ref{thm2}. Let $\hat{\Omega}_r$ be a consistent estimator of $\Omega_r$, and let $\epsilon_n$ be a sequence satisfying $\epsilon_n \to 0$ and $\sqrt{n}\,\epsilon_n \to \infty$.
Define the numerical directional derivative statistics for $B$ simulated draws $Z^{(b)} \overset{iid}{\sim} N(0,\hat{\Omega}_r)$, $b=1,\ldots,B$:
\begin{equation*}
T_U^{(b)}=\frac{U_Q\!\left(\hat{\theta}+\epsilon_n Z^{(b)}\right)-U_Q(\hat{\theta})}{\epsilon_n}, \qquad T_L^{(b)}=\frac{L_Q\!\left(\hat{\theta}+\epsilon_nZ^{(b)}\right)-L_Q(\hat{\theta})}{\epsilon_n}.
\end{equation*}

Let $q_U(\tau)$ and $q_L(\tau)$ denote the empirical $\tau$-quantiles of $\{T_U^{(b)}\}$ and $\{T_L^{(b)}\}$ respectively. Then, asymptotic $(1-\alpha)$ confidence intervals for $U_Q(\theta_0)$ and $L_Q(\theta_0)$ are
\begin{equation}
CI_U=\left[\hat{U}_Q-\frac{q_U(1-\alpha/2)}{\sqrt{n}},
\ \hat{U}_Q-\frac{q_U(\alpha/2)}{\sqrt{n}}\right].
\end{equation}
\begin{equation}
CI_L=\left[\hat{L}_Q-\frac{q_L(1-\alpha/2)}{\sqrt{n}},
\ \hat{L}_Q-\frac{q_L(\alpha/2)}{\sqrt{n}}\right].
\end{equation}
\qed
\end{corollary}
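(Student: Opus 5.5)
The argument follows the numerical delta method of \cite{fang2019inference}. We check its two hypotheses, verify that the simulated statistics $T_U^{(b)}$ consistently estimate the law of $U'_{\theta_0}(Z)$ given in Theorem \ref{thm2}, and then invert the limit to get coverage. As before, only the upper bound is treated; the lower bound is symmetric.

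\emph{Step 1 (hypotheses).} First, $U_Q$ must be Hadamard directionally differentiable at $\theta_0$ with the directional derivative found in the proof of Theorem \ref{thm2}:
\begin{equation*}
U'_{\theta_0}(h)=\min_{j\in J_0} g_{U,j}^\top h .
\end{equation*}
This map is continuous and positively homogeneous in $h$. Second, we need $\sqrt n(\hat\theta-\theta_0)\Rightarrow Z$, and the conditional law of $Z^{(b)}\sim N(0,\hat\Omega_r)$ must converge to that of $Z$. The latter holds because $\hat\Omega_r\to\Omega_r$ in probability and the Gaussian law is continuous in its covariance.

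\emph{Step 2 (the finite difference tracks the derivative).} Decompose
\begin{equation*}
T_U^{(b)}=\frac{U_Q(\hat\theta+\epsilon_n Z^{(b)})-U_Q(\theta_0)}{\epsilon_n}-\frac{U_Q(\hat\theta)-U_Q(\theta_0)}{\epsilon_n}.
\end{equation*}
Set $\delta_n=\hat\theta-\theta_0=O_p(n^{-1/2})$. Since $\sqrt n\,\epsilon_n\to\infty$, we have $\delta_n/\epsilon_n=o_p(1)$. The first term equals
\begin{equation*}
\frac{U_Q\bigl(\theta_0+\epsilon_n(Z^{(b)}+\delta_n/\epsilon_n)\bigr)-U_Q(\theta_0)}{\epsilon_n}.
\end{equation*}
Its direction $Z^{(b)}+\delta_n/\epsilon_n$ converges to $Z^{(b)}$ and the step $\epsilon_n\to0$, so Hadamard directional differentiability (which allows the direction to vary along the sequence) shows it converges to $U'_{\theta_0}(Z^{(b)})$. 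The second term equals $\frac{\sqrt n(U_Q(\hat\theta)-U_Q(\theta_0))}{\sqrt n\,\epsilon_n}$, which is $O_p(1)/(\sqrt n\epsilon_n)=o_p(1)$ by Theorem \ref{thm2}.

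To make this rigorous I would use that $U_Q$ is locally Lipschitz near $\theta_0$: it is a finite min of affine maps times the smooth function $1/h_Q$ with $h_Q(\theta_0)>0$. Local Lipschitzness upgrades pointwise Hadamard differentiability to uniform convergence over compact sets of directions. That gives
\begin{equation*}
\sup_{z\in K}\left|\frac{U_Q(\hat\theta+\epsilon_n z)-U_Q(\hat\theta)}{\epsilon_n}-U'_{\theta_0}(z)\right|\xrightarrow{p}0
\end{equation*}
for every compact $K$, which together with tightness of $Z^{(b)}$ yields conditional weak convergence of $T_U^{(b)}$ to $U'_{\theta_0}(Z)$ in bounded Lipschitz distance. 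This uniform step is the main technical obstacle. It is also precisely Theorem 3.1 of \cite{fang2019inference}, which I would invoke directly once local Lipschitzness and the rate conditions on $\epsilon_n$ are checked.

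\emph{Step 3 (quantiles and coverage).} Letting $B\to\infty$, the empirical quantiles $q_U(\tau)$ converge to the conditional quantiles of $T_U$, which in turn converge to the quantiles $c_\tau$ of $U'_{\theta_0}(Z)$. This requires the distribution function of $U'_{\theta_0}(Z)=\min_{j\in J_0}g_{U,j}^\top Z$ to be continuous and strictly increasing at $c_{\alpha/2}$ and $c_{1-\alpha/2}$. That holds when some $g_{U,j}^\top\Omega_r g_{U,j}>0$, since a minimum of nondegenerate Gaussians has a continuous law; I would add this nondegeneracy as a mild assumption. Then
\begin{equation*}
P\bigl(U_Q(\theta_0)\in CI_U\bigr)=P\Bigl(q_U(\alpha/2)\le\sqrt n(\hat U_Q-U_Q(\theta_0))\le q_U(1-\alpha/2)\Bigr),
\end{equation*}
and by Theorem \ref{thm2} and Slutsky's theorem this probability converges to $P(c_{\alpha/2}\le U'_{\theta_0}(Z)\le c_{1-\alpha/2})=1-\alpha$.
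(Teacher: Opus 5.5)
Your argument is essentially the paper's. The paper gives no separate proof of Corollary~\ref{coro2}; it justifies it by appeal to the numerical delta method of \cite{fang2019inference}. You spell out the hypotheses of that result: Hadamard directional differentiability from the proof of Theorem~\ref{thm2}, local Lipschitzness of $U_Q$, the rates $\epsilon_n\to0$ and $\sqrt n\,\epsilon_n\to\infty$, and conditional convergence of the $N(0,\hat\Omega_r)$ draws. You then invert the quantiles. You are more careful than the paper in one respect: exact $1-\alpha$ coverage requires the distribution function of $\min_{j\in J_0}g_{U,j}^\top Z$ to be continuous and strictly increasing at $c_{\alpha/2}$ and $c_{1-\alpha/2}$, and the stated corollary omits this. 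Your uniform step in Step~2 is also easier than you suggest. In finite dimensions, Hadamard directional differentiability at $\theta_0$ already gives uniform convergence of the difference quotients over compact sets of directions. Local Lipschitzness is needed only to move the base point from $\theta_0$ to $\hat\theta$, at a cost of at most $2L\|\hat\theta-\theta_0\|/\epsilon_n=o_p(1)$.

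Step~3 needs one correction. The sufficient condition must be $g_{U,j}^\top\Omega_r g_{U,j}>0$ for \emph{every} $j\in J_0$, not for some $j$. If only one active term is nondegenerate, the minimum can have an atom exactly at a quantile. For instance, $\min\{0,N\}$ with $N$ a centered nondegenerate Gaussian puts mass $1/2$ at $0=c_{1-\alpha/2}$.

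This case matters for PoCs. The trivial bound components have generalized gradient exactly $0$ when active: the constant $0$ in the lower bounds of PNS, PN and PS, and the constant $1$ in the upper bounds of PN and PS (written as $P(x,y)/P(x,y)$ or $P(x',y')/P(x',y')$ in the form of Lemma~\ref{lem1}). So whenever such a component ties with a nondegenerate one, your condition as stated holds, yet the limit law has an atom at $0$. The inversion step then does not give exact coverage.

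With ``every'' in place of ``some'', your proof goes through. Continuity holds because a minimum of finitely many continuous laws is continuous. Strict monotonicity at the quantiles follows from the full support of $Z$ on the range of $\Omega_r$. The tie cases involving a trivial component are then excluded and would need a separate one-sided argument.
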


\begin{remark}
The representation in Lemma~1 is not specific to PoCs. The same form holds for any causal quantity
whose sharp bounds can be written as in the form of equation \eqref{equ:affineform} with $h(\theta)>0$ and finite $J,K$. All asymptotic results in Theorems \ref{thm1} and \ref{thm2} continue to hold
under the same setup and assumptions.    
\end{remark}

\section{Simulation}
In this section we run experiments to show that the proposed number of samples provided by Theorems and Corollary in last section are adequate to obtain PoCs within the desired margin errors. We study the bounds for PNS, which is one of the most frequently used PoCs in causal studies. For comparison, we use the same simulation setup as \cite{li2022probabilities}, which studies the sharp PNS bound given in Section 2. It is trivial to verify that this sharp bound has the form shown in Lemma $1$.

\subsection{Experiment: estimating PNS bound}
Following \cite{li2022probabilities}, we consider two parameter settings that share the same causal graph in Figure \ref{fig:scm} but differ in their coefficients. Here $X$ is a binary treatment with $x=1$, $x'=0$, $Y$ is a binary outcome with $y=1$, $y'=0$, and $Z=(Z_1,\ldots,Z_{20})$ is a set of 20 mutually independent binary covariates. Detailed coefficients are provided in Supplementary materials.

\begin{figure}[!htbp]
    \centering
    \includegraphics[width=0.3\linewidth]{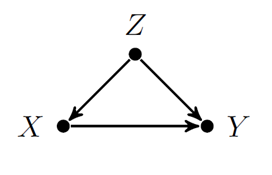}
    \caption{The Causal model for PNS bound experiment. X and Y are binary, and Z is a set of 20 independent binary confounders.}
    \label{fig:scm}
\end{figure}

Let $U_{Z_i}$, $U_X$, and $U_Y$ be independent Bernoulli exogenous variables. The endogenous covariates are defined as follows:
\begin{equation*}
Z_i = U_{Z_i}, \qquad i=1,\ldots,20,
\end{equation*}
and
\begin{equation*}
M_X=\sum_{i=1}^{20} a_i Z_i, \qquad M_Y=\sum_{i=1}^{20} b_i Z_i,
\end{equation*}
where $\{a_i\}$ and $\{b_i\}$ are drawn from Uniform$(-1,1).$
Here are the structural equations and we generate $(X,Y)$ following these equations with a constant $C \sim \text{Uniform}(-1,1)$:
\begin{equation*}
\begin{aligned}
    X=f_X(M_X,U_X)= &
\begin{cases}
1, \quad \text{if } M_X+U_X>0.5,\\
0, \quad  \text{otherwise},
\end{cases}\\
Y=f_Y(X,M_Y,U_Y)= &
\begin{cases}
1,  \quad 0<CX+M_Y+U_Y<1, \,\text{or } 1<CX+M_Y+U_Y<2, \\
0, \quad \text{otherwise}.
\end{cases}
\end{aligned}
\end{equation*}

Since all exogenous variables are binary and independent, the population experimental quantities (\eg, $P(y_x)$ and $P(y_{x'})$) and the population observational quantities (\eg, $P(x,y')$ and $P(x',y)$) can be computed exactly by summing over all configurations of $(U_X,U_Y,U_{Z_1},\ldots,U_{Z_{20}})$. For instance, we can compute
\begin{equation*}
P(y_x)=
\sum_{u_X,u_Y,\mathbf{u}_Z}
P(u_X)\,P(u_Y)\,
\prod_{i=1}^{20} P(u_{Z_i}) 
f_Y\!\bigl(1,\,M_Y(\mathbf{u}_Z),\,u_Y\bigr),
\end{equation*}
and
\begin{equation*}
P(x,y)=
\sum_{u_X,u_Y,\mathbf{u}_Z}
P(u_X)\,P(u_Y)\,
\prod_{i=1}^{20} \Bigl[P(u_{Z_i}) \cdot
f_X\!\bigl(M_X(\mathbf{u}_Z),u_X\bigr)\,\cdot
f_Y\!\Bigl(f_X,\,M_Y(\mathbf{u}_Z),\,u_Y\Bigr) \Bigr],
\end{equation*}
where $\mathbf{u}_Z=(u_{Z_1},\ldots,u_{Z_{20}})$ and $M_X(\mathbf{u}_Z), M_Y(\mathbf{u}_Z)$ denote the corresponding realizations of $M_X$ and $M_Y$. 

We then generate finite samples under two rules. For the experimental sample of size $m$, we draw $(U_X,U_Y,U_{Z_1},\ldots,U_{Z_{20}})$ from their Bernoulli distributions, assign $X\sim \mathrm{Bernoulli}(0.5)$ independently of the exogenous variables, and set $Y=f_Y(X,M_Y,U_Y)$. This will give i.i.d.\ experimental pairs $(X,Y)$. Let $m_{ab}$ be the number of experimental samples with $(X=a,Y=b)$ and $m_{a\cdot}=m_{a0}+m_{a1}$. We estimate
\begin{equation*}
\widehat{P}(y_x)=\widehat{P}(Y=1\mid X=1)=\frac{m_{11}}{m_{1\cdot}}, \qquad \widehat{P}(y_{x'})=\widehat{P}(Y=1\mid X=0)=\frac{m_{01}}{m_{0\cdot}}.
\end{equation*}
For the observational sample of size $n$, we again draw the exogenous variables, set $X=f_X(M_X,U_X)$ and $Y=f_Y(X,M_Y,U_Y)$, and estimate the required joint probabilities by empirical frequencies. Let $n_{ab}$ be the number of observational samples with $(X=a,Y=b)$ and $n=\sum_{a\in\{0,1\}}\sum_{b\in\{0,1\}} n_{ab}$. For example,
\begin{equation*}
\widehat{P}(x,y)=\frac{n_{11}}{n},
\quad
\widehat{P}(x,y')=\frac{n_{10}}{n},
\quad
\widehat{P}(x',y)=\frac{n_{01}}{n}.
\end{equation*}
These estimated probabilities are then plugged into the bound formulas to get finite-sample bound estimates.

Assume the ratio of experimental data to observational data is $m/n \rightarrow r=1$. Then, Corollary \ref{coro1} gives adequate experimental size $m\ge 1921$, while result provided in \cite{li2022probabilities} gives $m \ge 6147$. Our method requires only about $31\%$ of the sample size suggested by the existing result. The detailed computation is in Appendix.

\begin{figure}[!htbp]
\centering
\includegraphics[width=0.33\linewidth]{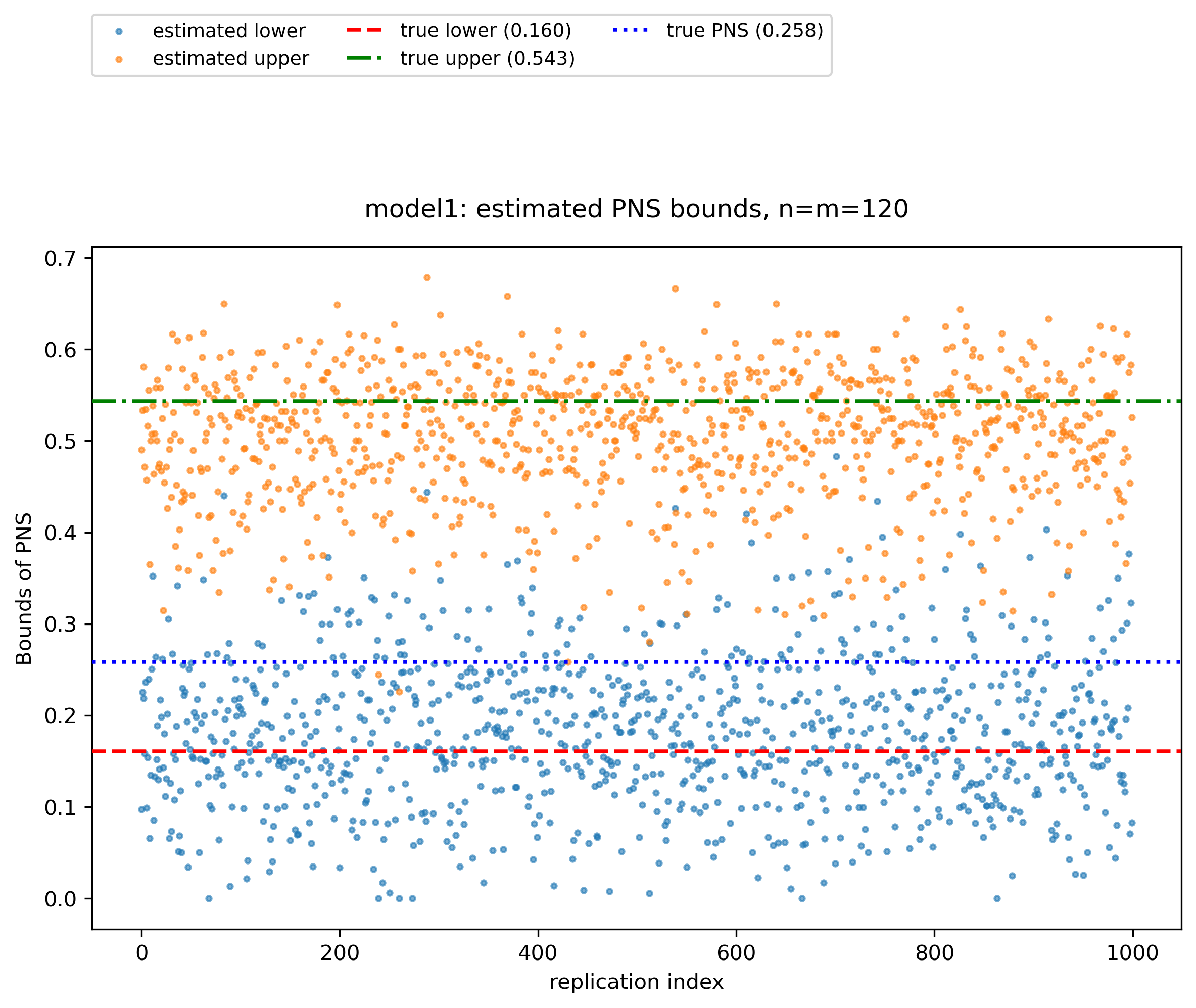}\hfill
\includegraphics[width=0.33\linewidth]{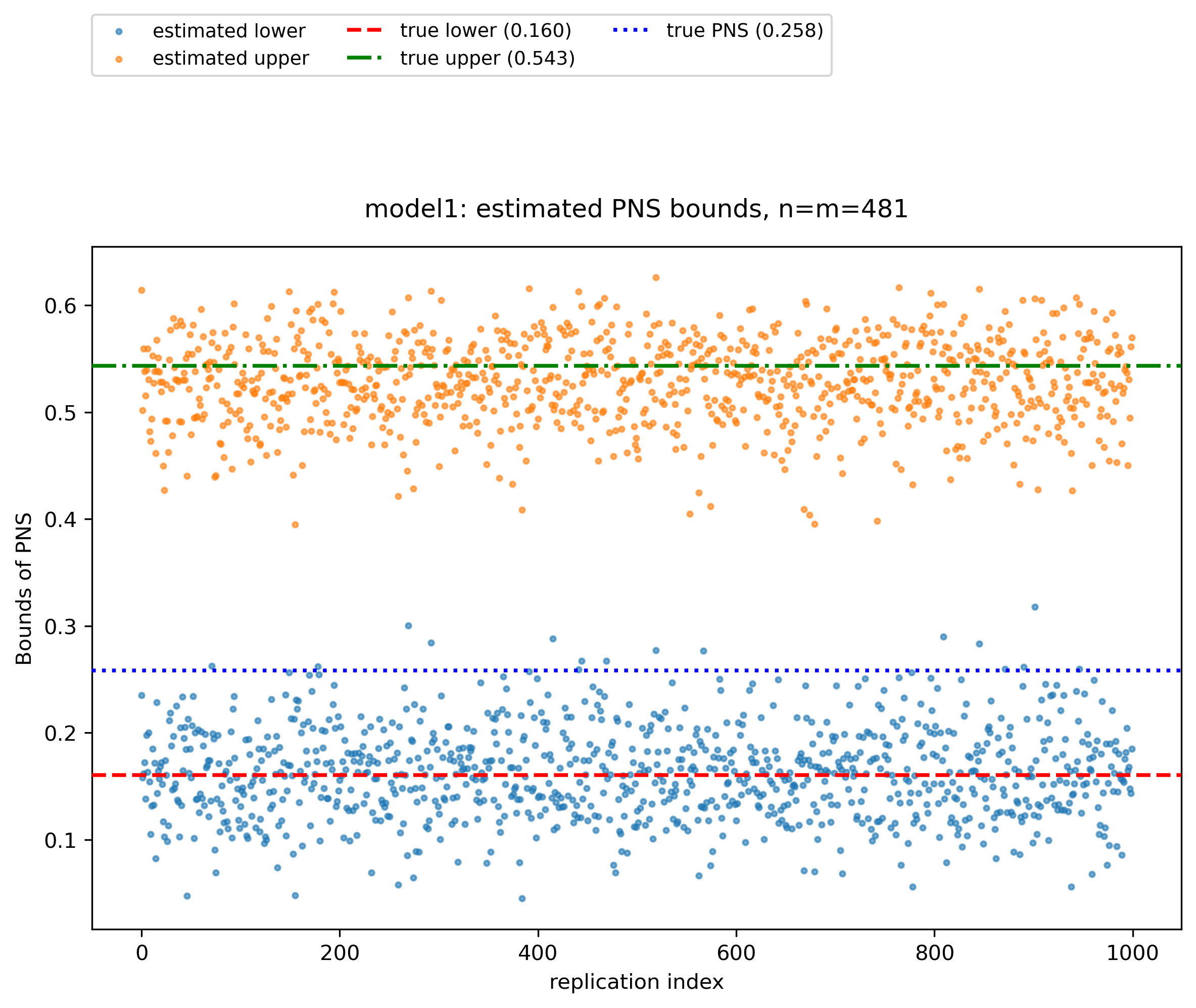}\hfill
\includegraphics[width=0.33\linewidth]{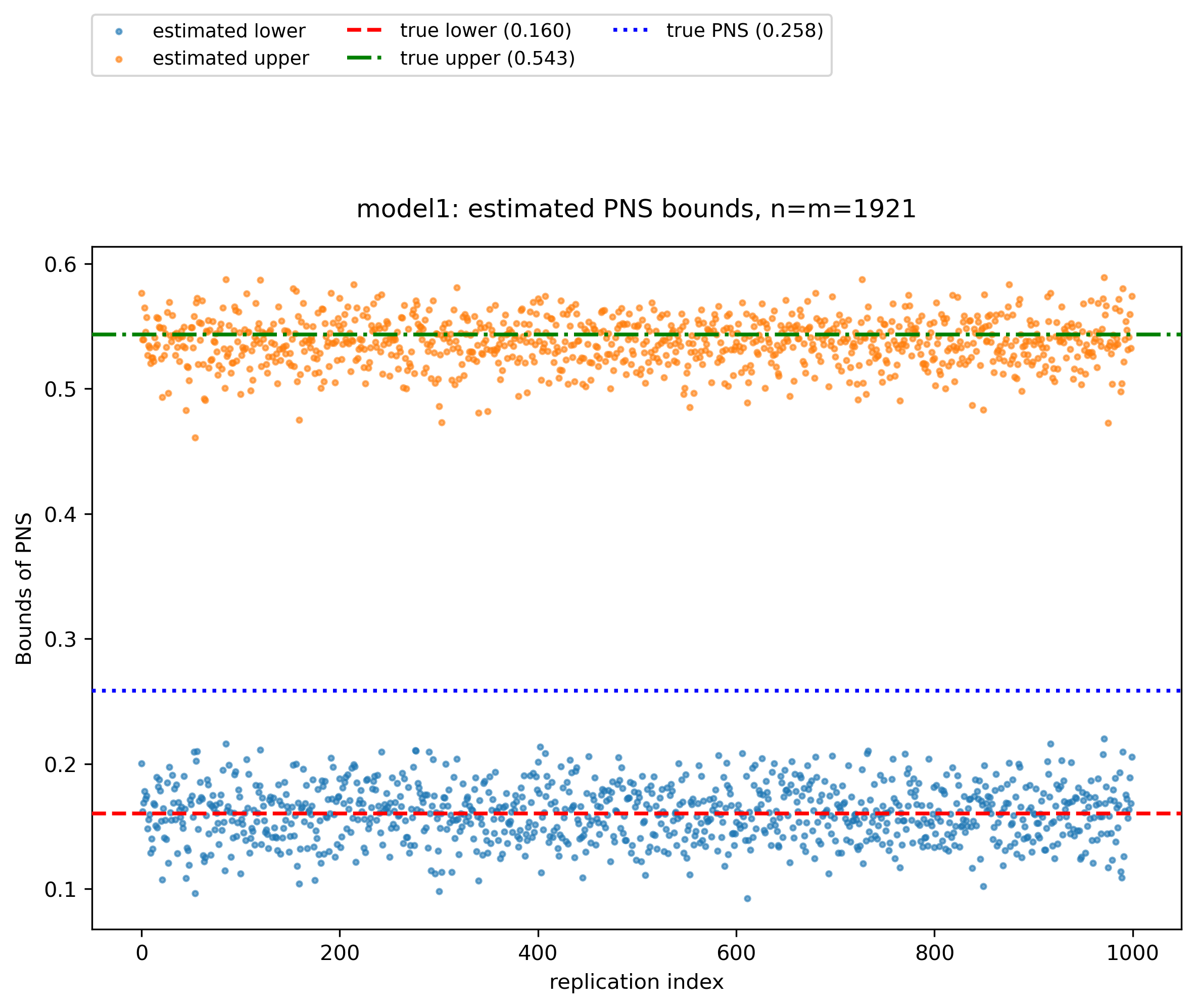}
\caption{Scatter plots under different sample sizes for Model 1. Left to right: $n=120, 481, 1921$.}
\label{fig:model1_scatter}
\end{figure}

For both Model 1 and 2, we run 1000 replications, and in each replication, we generate experimental and observational samples with sizes $n \in \{120, 481, 1921\}$. The results are shown in Figures \ref{fig:model1_scatter} and \ref{fig:model2_scatter}. When $n=120$, both bounds are highly noisy with large dispersion across replications. Increasing the sample size to $n=481$ reduces the noise, but the variance remains relatively large. When 
$n=1921$, the adequate size we computed using Corollary $1$, we can see that the estimates become very stable, with both the upper and lower bound concentrates tightly around the true bounds, which shows a good accuracy and stability.

\begin{figure}[!htbp]
\centering
\includegraphics[width=0.33\linewidth]{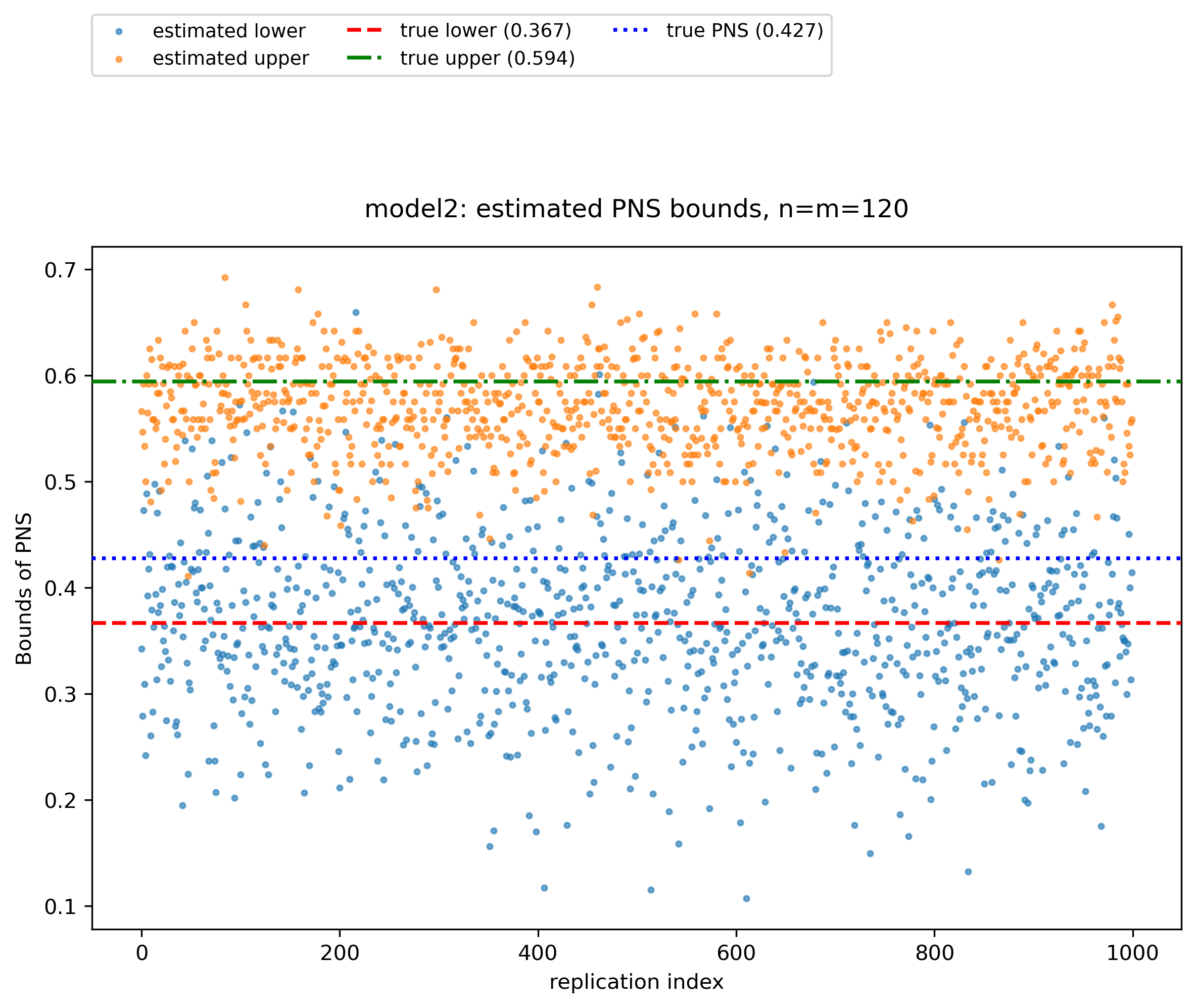}\hfill
\includegraphics[width=0.33\linewidth]{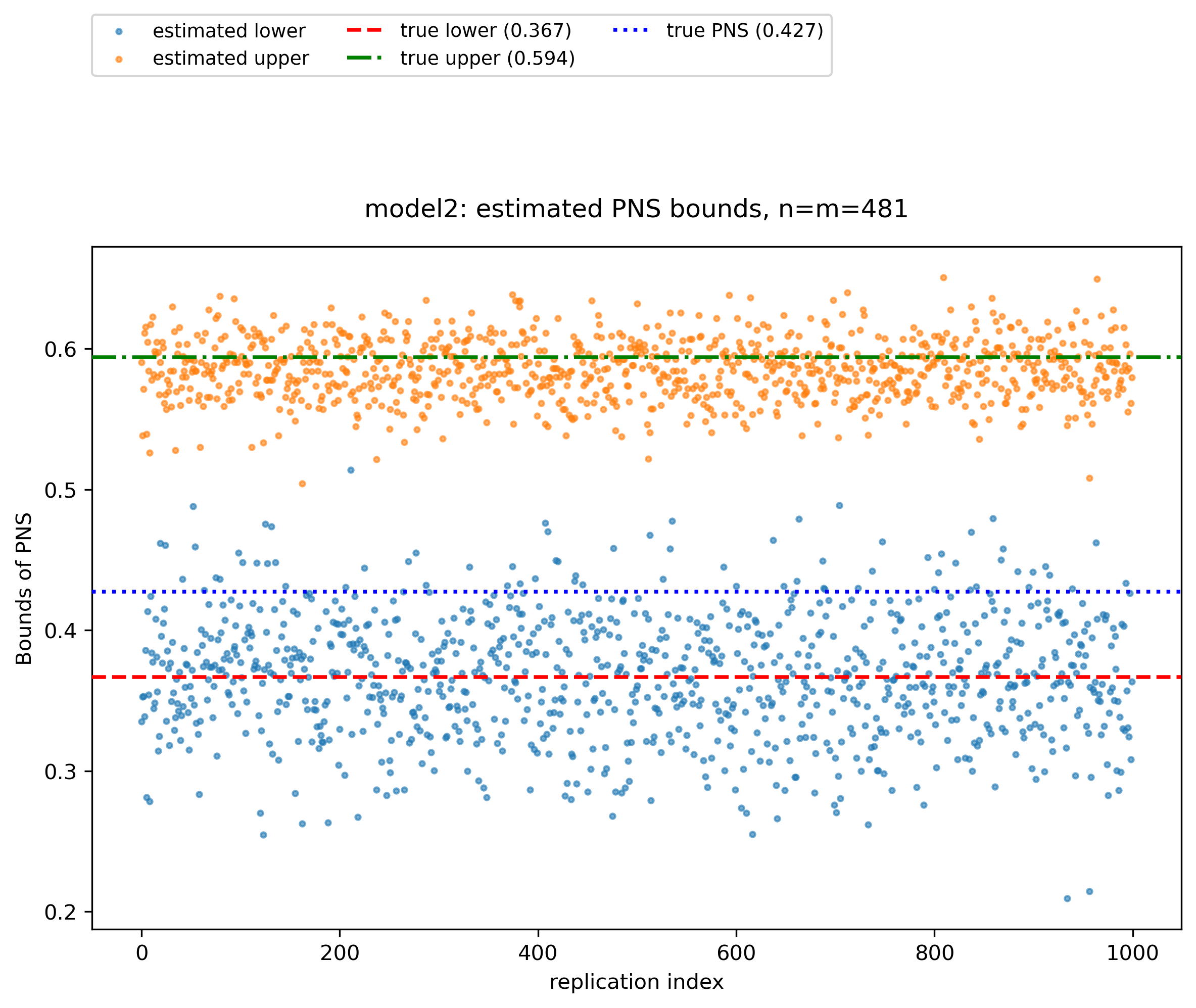}\hfill
\includegraphics[width=0.33\linewidth]{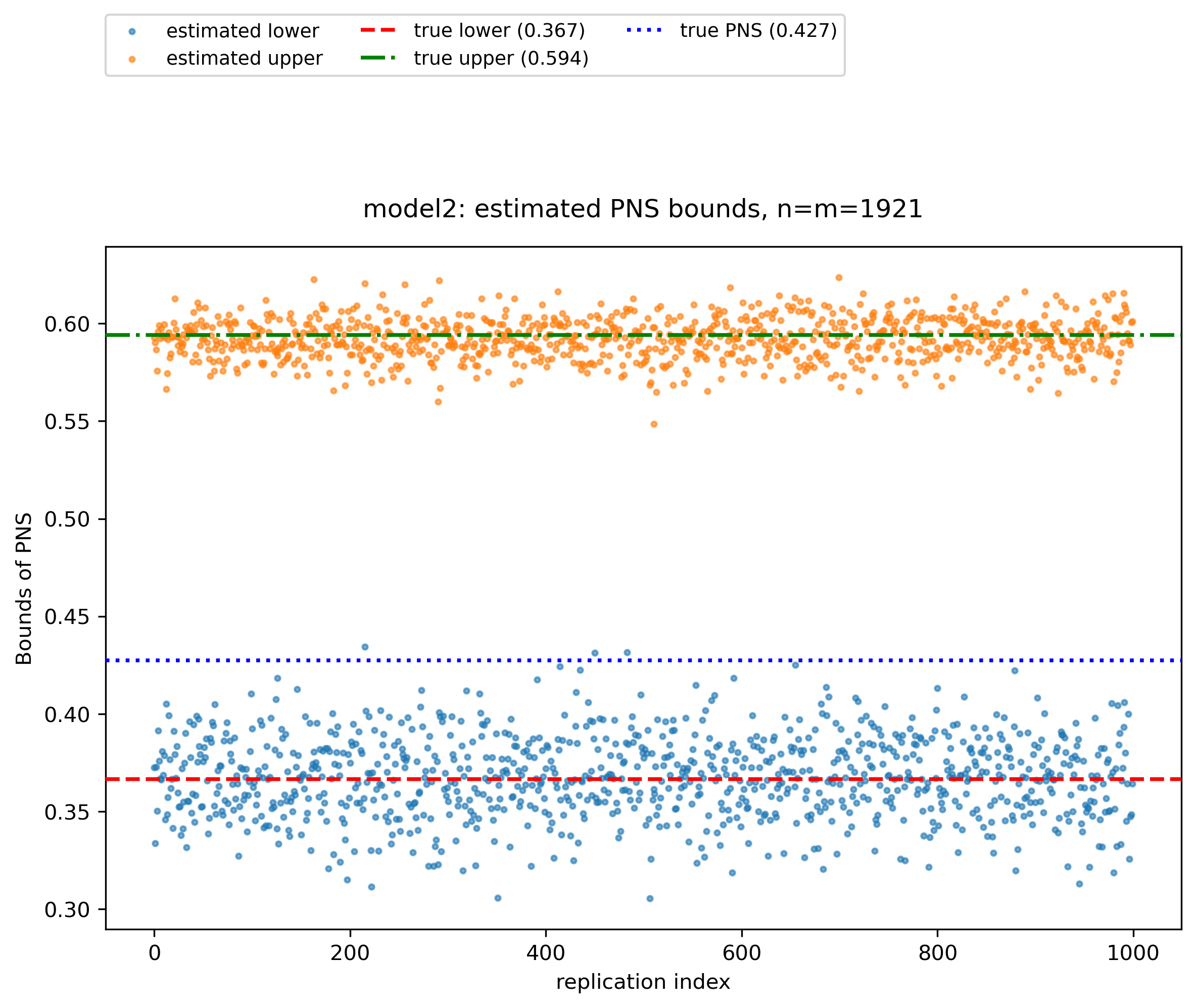}
\caption{Scatter plots under different sample sizes for Model 2. Top to bottom: $n=120, 481, 1921$.}
\label{fig:model2_scatter}
\end{figure}

We also examine the relationship between sample size $n$ and the average estimation error, defined as $(\widehat{\mathrm{bound}}-\mathrm{true\ bound})$ averaged over 1000 replications for each model. See Figure \ref{fig:pnsm1}. Using the target error level 
$0.05 $(red horizontal line), both Model 1 and Model 2 reach the desired accuracy with sample sizes below $300$ for both the upper and lower bounds (blue/orange curves). The average error drops quickly as $n$ increases up to around $500$, which suggests that $n=500$ can already work well in practice. This also confirms that the theoretical sample size $n=1921$ is more than sufficient and it represents the adequate size that considers the worst case.

\begin{figure}[!htbp]
    \centering
    \includegraphics[width=0.49\linewidth]{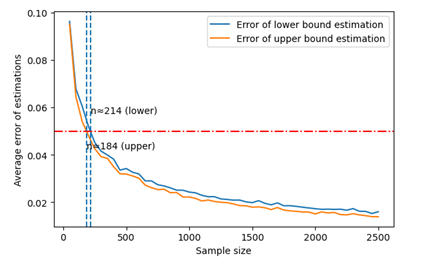} \hfill
    \includegraphics[width=0.49\linewidth]{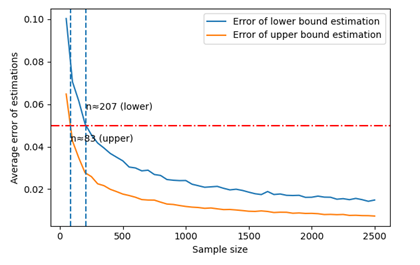}
    \caption{Average error of estimations VS sample size. Left: Model 1, Right:Model 2}
    \label{fig:pnsm1}
\end{figure}

We further investigate the relationship between the average estimation error and the sample size $n$ beyond Models 1 and 2. In order to approximate a more general setting, we repeatedly resample the SCM coefficients and rerun the same simulation design as in Model $1$ for $20 $ independent draws. For each $n$, we then average the estimation error over both the 1000 Monte Carlo replications, and report the resulting curve in Figure \ref{fig:20 replicates}. It shows that both the upper and lower bound estimation errors decrease rapidly as $n$ increases, and in this aggregated experiment the errors converge even faster than in Models 1 and 2. The target error level $0.05$ is reached around $100$, smaller than $500$, which is consistent with our earlier observations. The theoretical sample size $n = 1921$ remains clearly sufficient with very small errors for both bounds. In summary, these results suggest that the sample-size selection rule using our Theorems is not tied to a single parameter setting and remains stable and accurate across a range of SCM with different coefficients.

\begin{figure}[!htbp]
    \centering
    \includegraphics[width=0.6\linewidth]{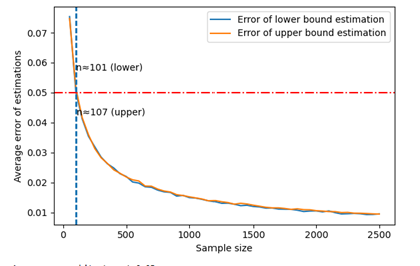}
    \caption{Average error with sample size for 20 replicates}
    \label{fig:20 replicates}
\end{figure}

\section{Discussion}

The bounds of PoCs are piecewise-defined, and points with multiple optimizers lie on the boundaries between pieces. Away from these boundaries, the optimizer is unique and thus Theorem~$1$ applies. Since the measure of boundary points are almost zero, we can think Theorem $1$ holds for most of time. This is supported by our simulations, where the target error is often achieved with far fewer samples than the theoretical results (around $1/4$ to $1/3$), suggesting that the active piece is quite stable in a neighborhood of $\theta_0$.

For the non-unique optimizer case in Theorem~$2$, one may consider a smooth approximation (\eg, Softmin) to enforce differentiability, but this introduces approximation bias and requires  hyperparameter tuning. For this reason we adopt a conservative variance-based approach. However, smoothing may still be useful for large-scale repeated evaluations or gradient-based optimization when near-ties are common and numerical stability is a concern.

Most importantly, although we use PoCs (PNS, PN, PS) in binary setup as motivating examples, Theorems~$1$ and~$2$ apply more broadly. We discuss two extensions here. First, our proposed method can be directly extended to multi-level discrete $X$ and $Y$, with the same CLT and (directional) delta-method arguments but at the cost of higher dimension and computational difficulty. Second, the same methods can be applied to any bounded causal quantity whose sharp bounds can be written as a finite minimum or maximum of linear (or ratio-type) functions of $\theta$, including linear combinations of PoCs such as the “benefit function” in \cite{li2019unit}.

\section{Conclusion}
We study the adequate sample size for estimating bounds of PoCs within a desired margin of error. In finite discrete settings, we first show that the bounds can be written as finite minima or maxima of explicit functions of observational and experimental probabilities, including both linear and ratio-type terms. Using this representation, when the bound endpoint is smooth at the true parameter, we use multivariate delta-method variance approximations to obtain asymptotic margins of error; when the endpoint is not smooth due to active-set switches in the finite min–max structure, we use a directional delta method and implement it using numerical methods following \cite{fang2019inference}. Our asymptotic results apply whenever the bound endpoints has this finite min max form, which covers many bounds in related literature and also extends to other bounded causal quantities, including linear combinations of PoCs. We also provide simulation studies showing that the proposed sample sizes are stable and work well in practice, and are less conservative than existing approaches. To our knowledge, this is the first work providing a general and systematic sample size framework for estimating non-identifiable PoC bounds that handles both smooth and non-smooth endpoints.

\clearpage
\bibliographystyle{apalike}
\bibliography{ref}

\clearpage
\appendix
\section{Appendix}
\subsection{Calculation of sample size in Experiment 1, PNS bound}
We will prove the case of PNS upper bound in \eqref{PNSbound} only. First, focus on the term $P(y_x) - P(y_{x'}) + P(x,y') + P(x',y)$. Denote $A= P(y_x), B= P(y_{x'}), C= P(x,y'), D=P(x',y)$, and we can see the $A$ and $B$ come from randomized controlled trial, which can be assumed to be independent with each other, and independent with $C$ and $D$; however, $C$ and $D$ come from a multinomial distribution, 
\begin{equation*}
(n_{x,y}, n_{x',y}=n_D,n_{x,y'}=n_C,n_{x',y'}) \,\sim\, \text{Multinomial} (n; P(x,y), P(x',y),P(x,y'),P(x',y') ), 
\end{equation*}
such that $C$ and $D$ are negatively correlated.  Combine $C$ and $D$ into one event to make $C \cup D \sim \text{Binomial}(n; P(x,y')+P(x',y)=:P_{C \cup D})$. Then we have:
\begin{equation*}
\begin{aligned}
\text{var} (\hat A -\hat B+\hat C+\hat D) = & \text{var}(\hat A) + \text{var}(\hat B) + \text{var} \widehat{(C+D)}\\
= &  \frac{ P(y_x)(1-P(y_x))}{m_A} + \frac{ P(y_{x'}(1-P(y_{x'})}{m_B} +\frac{P_{C \cup D} (1- P_{C \cup D})}{n}.
\end{aligned}
\end{equation*}
By AM-GM inequality, $p(1-p)\le 1/4$ for any $p \in [0,1]$, thus, 
\begin{equation*}
    \text{var} (\hat A -\hat B+\hat C+\hat D) \le \frac{1}{4m_A}+ \frac{1}{4m_B}+ \frac{1}{4n},
\end{equation*}
with constraint $m_A+m_B=m$.  Assume that $m_A=m_B= m/2$, then
\begin{equation} \label{eq5}
\begin{aligned}
    Err(\hat A -\hat B+\hat C+\hat D)= & z_{1-\alpha/2}\sqrt{\text{var} (\hat A -\hat B+\hat C+\hat D) }\\
    \le & \frac{ z_{1-\alpha/2}}{2} \sqrt{\frac{1}{m_A}+ \frac{1}{m_B}+ \frac{1}{n} }\\
    =& z_{1-\alpha/2}\sqrt{\frac{1}{m}+ \frac{1}{4 n} }.
\end{aligned}
\end{equation}
 
Taking the error bound to be $\epsilon =0.05$, then \eqref{eq5} gives that 
\begin{equation}
    m \ge (1+\frac{1}{4r}) ( \frac{ z_{1-\alpha/2}}{\epsilon} )^2 \approx 1537(1+\frac{1}{4r}).
\end{equation}
When we take $m=n$, i.e. $r=1$, \eqref{eq5} gives adequate experimental size $n=m\ge 1921$, which is the sample size in the worst case.

\subsection{Coefficients of Experiments}
\subsubsection{Model 1}
\begin{equation*}
Z_i = U_{Z_i}, \qquad i \in \{1,\ldots,20\}.
\end{equation*}

\begin{equation*}
X = f_X(M_X,U_X)=
\begin{cases}
1, & \text{if } M_X + U_X > 0.5,\\
0, & \text{otherwise.}
\end{cases}
\end{equation*}

\begin{equation*}
Y = f_Y(X,M_Y,U_Y)=
\begin{cases}
1, & \text{if } 0 < CX + M_Y + U_Y < 1 \ \text{or}\ 1 < CX + M_Y + U_Y < 2,\\
0, & \text{otherwise.}
\end{cases}
\end{equation*}

\begin{equation*}
\begin{aligned}
U_{Z_1} &\sim \mathrm{Bernoulli}(0.352913861526), &\quad U_{Z_2} &\sim \mathrm{Bernoulli}(0.460995855543),\\
U_{Z_3} &\sim \mathrm{Bernoulli}(0.331702473392), &\quad U_{Z_4} &\sim \mathrm{Bernoulli}(0.885505026779),\\
U_{Z_5} &\sim \mathrm{Bernoulli}(0.017026872706), &\quad U_{Z_6} &\sim \mathrm{Bernoulli}(0.380772701708),\\
U_{Z_7} &\sim \mathrm{Bernoulli}(0.028092602705), &\quad U_{Z_8} &\sim \mathrm{Bernoulli}(0.220819399962),\\
U_{Z_9} &\sim \mathrm{Bernoulli}(0.617742227477), &\quad U_{Z_{10}} &\sim \mathrm{Bernoulli}(0.981975046713),\\
U_{Z_{11}} &\sim \mathrm{Bernoulli}(0.142042291381), &\quad U_{Z_{12}} &\sim \mathrm{Bernoulli}(0.833602592350),\\
U_{Z_{13}} &\sim \mathrm{Bernoulli}(0.882938907115), &\quad U_{Z_{14}} &\sim \mathrm{Bernoulli}(0.542143191999),\\
U_{Z_{15}} &\sim \mathrm{Bernoulli}(0.085023436884), &\quad U_{Z_{16}} &\sim \mathrm{Bernoulli}(0.645357252864),\\
U_{Z_{17}} &\sim \mathrm{Bernoulli}(0.863787135134), &\quad U_{Z_{18}} &\sim \mathrm{Bernoulli}(0.460539711624),\\
U_{Z_{19}} &\sim \mathrm{Bernoulli}(0.314014079207), &\quad U_{Z_{20}} &\sim \mathrm{Bernoulli}(0.685879388218),\\
U_X &\sim \mathrm{Bernoulli}(0.601680857267), &\quad U_Y &\sim \mathrm{Bernoulli}(0.497668975278).
\end{aligned}
\end{equation*}

\begin{equation*}
C = -0.77953605542.
\end{equation*}

\begin{equation*}
M_X = [\,Z_1\ Z_2\ \cdots\ Z_{20}\,]\ \beta_X, \qquad
\beta_X =
\begin{bmatrix}
 0.259223510143\\
-0.658140989167\\
-0.75025831768\\
 0.162906462426\\
 0.652023463285\\
-0.0892939586541\\
 0.421469107769\\
-0.443129684766\\
 0.802624388789\\
-0.225740978499\\
 0.716621631717\\
 0.0650682260309\\
-0.220690334026\\
 0.156355773665\\
-0.50693672491\\
-0.707060278115\\
 0.418812816935\\
-0.0822118703986\\
 0.769299853833\\
-0.511585391002
\end{bmatrix}.
\end{equation*}

\begin{equation*}
M_Y = [\,Z_1\ Z_2\ \cdots\ Z_{20}\,]\ \beta_Y, \qquad
\beta_Y =
\begin{bmatrix}
-0.792867111918\\
 0.759967136147\\
 0.55437722369\\
 0.503970540409\\
-0.527187144651\\
 0.378619988091\\
 0.269255196301\\
 0.671597043594\\
 0.396010142274\\
 0.325228576643\\
 0.657808327574\\
 0.801655023993\\
 0.0907679484097\\
-0.0713852594543\\
-0.0691046005285\\
-0.222582013343\\
-0.848408031595\\
-0.584285069026\\
-0.324874831799\\
 0.625621583197
\end{bmatrix}.
\end{equation*}

\subsubsection{Model 2}
\begin{equation*}
Z_i = U_{Z_i}, \qquad i \in \{1,\ldots,20\}.
\end{equation*}

\begin{equation*}
X = f_X(M_X,U_X)=
\begin{cases}
1, & \text{if } M_X + U_X > 0.5,\\
0, & \text{otherwise.}
\end{cases}
\end{equation*}

\begin{equation*}
Y = f_Y(X,M_Y,U_Y)=
\begin{cases}
1, & \text{if } 0 < CX + M_Y + U_Y < 1 \ \text{or}\ 1 < CX + M_Y + U_Y < 2,\\
0, & \text{otherwise.}
\end{cases}
\end{equation*}

\begin{equation*}
\begin{aligned}
U_{Z_1} &\sim \mathrm{Bernoulli}(0.524110233482), &\quad U_{Z_2} &\sim \mathrm{Bernoulli}(0.689566064108),\\
U_{Z_3} &\sim \mathrm{Bernoulli}(0.180145428970), &\quad U_{Z_4} &\sim \mathrm{Bernoulli}(0.317153536644),\\
U_{Z_5} &\sim \mathrm{Bernoulli}(0.046268153873), &\quad U_{Z_6} &\sim \mathrm{Bernoulli}(0.340145244411),\\
U_{Z_7} &\sim \mathrm{Bernoulli}(0.100912238566), &\quad U_{Z_8} &\sim \mathrm{Bernoulli}(0.772038172066),\\
U_{Z_9} &\sim \mathrm{Bernoulli}(0.913108434869), &\quad U_{Z_{10}} &\sim \mathrm{Bernoulli}(0.364272290967),\\
U_{Z_{11}} &\sim \mathrm{Bernoulli}(0.063667554704), &\quad U_{Z_{12}} &\sim \mathrm{Bernoulli}(0.454839320009),\\
U_{Z_{13}} &\sim \mathrm{Bernoulli}(0.586687215140), &\quad U_{Z_{14}} &\sim \mathrm{Bernoulli}(0.018824647595),\\
U_{Z_{15}} &\sim \mathrm{Bernoulli}(0.871017316787), &\quad U_{Z_{16}} &\sim \mathrm{Bernoulli}(0.164966968157),\\
U_{Z_{17}} &\sim \mathrm{Bernoulli}(0.578925020078), &\quad U_{Z_{18}} &\sim \mathrm{Bernoulli}(0.983082980658),\\
U_{Z_{19}} &\sim \mathrm{Bernoulli}(0.018033993991), &\quad U_{Z_{20}} &\sim \mathrm{Bernoulli}(0.074629121266),\\
U_X &\sim \mathrm{Bernoulli}(0.29908139311), &\quad U_Y &\sim \mathrm{Bernoulli}(0.9226108109253).
\end{aligned}
\end{equation*}

\begin{equation*}
C = 0.975140894243.
\end{equation*}

\begin{equation*}
M_X = [\,Z_1\ Z_2\ \cdots\ Z_{20}\,]\beta_X, \qquad
\beta_X =
\begin{bmatrix}
\ \ 0.843870221861\\
\ \ 0.178759296447\\
-0.372349746729\\
-0.950904544846\\
-0.439457721339\\
-0.725970103834\\
-0.791203963585\\
-0.843183562918\\
-0.68422616618\\
-0.782051030131\\
-0.434420454146\\
-0.445019418094\\
\ \ 0.751698021555\\
-0.185984172192\\
\ \ 0.191948271392\\
\ \ 0.401334543567\\
\ \ 0.331387702568\\
\ \ 0.522595634402\\
-0.928734581669\\
\ \ 0.203436441511
\end{bmatrix}.
\end{equation*}

\begin{equation*}
M_Y = [\,Z_1\ Z_2\ \cdots\ Z_{20}\,]\beta_Y, \qquad
\beta_Y =
\begin{bmatrix}
-0.453251661832\\
\ \ 0.424563325534\\
\ \ 0.0924810605305\\
\ \ 0.312680246141\\
\ \ 0.7676961338\\
\ \ 0.124337421843\\
-0.435341306455\\
\ \ 0.248957751703\\
-0.161303883519\\
-0.537653062121\\
-0.222087991408\\
\ \ 0.190167775134\\
-0.788147770713\\
-0.593030174012\\
-0.308066297974\\
\ \ 0.218776507777\\
-0.751253645088\\
-0.11151455376\\
\ \ 0.785227235182\\
-0.568046522383
\end{bmatrix}.
\end{equation*}

\end{document}